\documentclass[letterpaper]{article}
\usepackage[preprint]{aaai2027}
\usepackage[hyphens]{url}
\usepackage{caption}
\usepackage{natbib}
\usepackage{algorithm}
\usepackage{algpseudocode}
\usepackage{amsmath,amssymb}
\usepackage{amsthm}
\usepackage{graphicx}
\usepackage{booktabs}
\usepackage{makecell}
\usepackage{tikz}
\usetikzlibrary{arrows.meta,positioning}
\definecolor{alarmblue}{RGB}{76,114,176}
\definecolor{radiusorange}{RGB}{221,132,82}
\newtheorem{lemma}{Lemma}
\newtheorem{theorem}[lemma]{Theorem}

\usepackage{xr}
\providecommand{\hyperref}[2][]{#2}
\begin{document}
%
\title{Refined Thompson Learning for Adaptive Bandits:
Power-Efficient Flexibility Scheduling Across Data Centers}

\author{Zixi Chen\thanks{Zixi Chen and Yifu Ding contributed equally to this work. Authors are listed in alphabetical order}, Yifu Ding*, Ruicheng Ao, David Simchi-Levi, Thomas Magnanti}

\affiliations{
}

\maketitle

\begin{abstract}
The rapid growth of large-scale AI workloads in data centers has placed increasing pressure on power grids in recent years. Since power systems must continuously balance supply and demand, there is growing interests in leveraging data-center workload flexibility as a grid service. We propose a contextual restless multi-armed bandit (CRMAB) framework in which a grid operator requests load reductions without observing internal job-scheduling decisions. Under index-ability guarantee, each data center or physical machine is modeled as a Markov decision process (MDP) over a cyclic virtual-machine (VM) job queue, with unknown rewards and transition dynamics learned online using Thompson sampling and Whittle-index policies. To improve learning under sparse and noisy observations, the framework augments an adaptive Thompson--Whittle (TW) policy with domain-informed transition priors and gated prior mixing. In baseline experiments, the best adaptive refined variant achieves 91.4\% of the oracle reward after 100 rounds and 96.8\% after 1,000 rounds. Across a 16-setting stress test spanning different state-space sizes and levels of contextual noise, the best refined variant consistently outperforms the original TW policy with high confidence while remaining competitive with EXP4. A graph-based prior further incorporates data-center hardware constraints, including computing-resource limits. Overall, the results demonstrate the economic potential of data-center flexibility as a grid service and highlight the importance of high-quality, open-source AI workload traces for developing and evaluating such services.

\end{abstract}

\section{Introduction}

The rapid growth of large-scale AI workloads has significantly increased energy consumption in data centers, placing additional greenhouse gas (GHG) emissions and grid stress worldwide \citep{kaack_aligning_2022, sajadi_power_2026}. In the U.S., electricity demand from data centers is projected to increase by 7\%--12\% by 2030~\citep{epri_powering_2024}, while globally, data center electricity consumption is expected to at least double over the next five years~\citep{chen_data_2025}. Flexible computing can mitigate these impacts by shifting or reordering workloads in response to grid or carbon conditions \citep{senga_flexible_2026, cheng_ai_2026}. Major cloud providers have begun voluntarily participating in demand response programs. For example, Google has collaborated with electric utilities on demand response initiatives to alleviate grid stress ~\citep{terrall_how_2025}. Recent demonstrations using workload power-orchestration software show that job rescheduling within a \textit{single} data center can directly respond to grid power-reduction signals \citep{colangelo_ai_2025}.

Existing approaches include workload scheduling \citep{dai_throughput-optimal_nodate,liu_watts_2026}, dynamic voltage frequency scaling (DVFS) \citep{gu_greenflow_2025, crozier_potential_2025}, and energy-aware AI systems \citep{song_energylens_2026,chung_openg2g_nodate}. However, detailed workload information and internal scheduling policies may be unavailable to grid operators because of privacy constraints. Despite these advances, data center operators are often reluctant to disclose detailed workload characteristics or internal scheduling policies because of privacy, security, and commercial concerns. In practice, they are typically willing to share only coarse-grained information, such as workload flexibility classes, aggregate power consumption, or historical VM statistics. Although formal grid codes governing data center demand response are not yet in place, future grid operators are expected to coordinate flexibility requests across geographically distributed data centers under highly uncertain operating conditions, requiring the balance of economic benefits, grid reliability, and quality-of-service guarantees.

\section{Related works}

    \paragraph{Job scheduling in dynamic data center management:} Open-sourced VM traces provide useful aggregate computing, memory, and workload information \citep{cortez_resource_2017, zhang_ivmr_2025}, motivating learning-based coordination from limited contextual observations.  \citep{guillen-perez2026dcclusteropt} introduced a benchmark reinforcement learning (RL) framework for scheduling geo-distributed data centers, leveraging VM datasets that capture AI workloads alongside diverse environmental and grid-related data, such as electricity prices, carbon intensity, and weather conditions. \citep{zhang_ivmr_2025} collects industrial-scale VM datasets and develops a comprehensive job rescheduling framework using optimization, metaheuristic, heuristic, and ML-based methodologies. These works including \citep{zhang_ivmr_2025} and \citep{guillen-perez2026dcclusteropt}, generally assume full observability of all information contained in the VM traces.

    \paragraph{RMAB and index-based policy:} The CRMAB framework is widely employed for optimal resource allocation under uncertainty, where each arm’s state space is modeled as a MDP. However, CRMAB problems are generally intractable and are known to be NP-hard with the expanding state space. To address this challenge, index policies are introduced to decouple the arms and enable tractable, per-arm decision making that leads to near-optimal solutions. \citep{whittle_restless_1988} proposed an index policy based on Lagrangian relaxation, which evaluates the marginal value of activating each arm independently. The Whittle index has been applied to stochastic deadline scheduling problems \citep{yu_deadline_2016}, \citep{graczova_generalized_2014}, where indexability is established and efficient algorithms are derived to compute the index in closed form. Arms are then prioritized and activated according to their index values. The index-based RMAB has also been applied to model demand responses in electric grids  ~\citep{chen_contextual_2024} and sustainable development.

\paragraph{Refined Thompson Learning for bandits:} During the index policy computation in CRMAB, the reward and transition functions are typically unknown. Thompson learning is a commonly used approach in bandit settings to infer transition dynamics and reward functions from observed data.  The parameterization and updates of the prior distribution depend on state visits, which are observed adaptively over the course of bandit interactions. The choice of the prior distribution can significantly affect learning performances, and a poorly-chosen prior may lead to slower convergence and lower accumulated rewards \citep{russo_tutorial_2018}. A common baseline for the prior distribution is to assume specified continuous or discrete distributions, such as Gaussian distributions or Gaussian mixture models (GMMs) \citep{liang_context_2025}. Recent works explore the use of neural networks (NNs) and generative models to construct task-specific priors.

\subsection{Contributions}

We thus develop \textbf{\underline{R}}obust and \textbf{\underline{A}}daptive \textbf{\underline{C}}omputing and \textbf{\underline{E}}nergy \textbf{\underline{R}}esource coordination plus the refined strategies (\textbf{RACER+}), as in Figure \ref{framework}. This contributes: (i) a CRMAB formulation for data-center flexibility through batch-level VM rescheduling; (ii) an adaptive-TW policy that shifts from exploration toward Whittle index as state observations accumulates; and (iii) domain-informed transition refinements, including structural support and gated priors under sparse and noisy observations.

\section{Methodology}
\label{tab:methods}

\subsection{Model formulation}

Consider $\mathcal N$ data centers, each executing batches of $N_j$ jobs from a cyclic VM queue. Without a flexibility request, the next $N_j$ jobs are executed. When activated, a data center may look ahead over $N_f\geq N_j$ jobs and select a lower-power feasible batch, with delay penalties for skipped latency-sensitive jobs. For arm $i$, state $s_{i,t}\in\mathcal S_i$ identifies the current queue/batch condition, $a_{i,t}\in\{0,1\}$ denotes passive/activated operation, $R_i(a,s)$ is the net flexibility reward, and $P_i^a$ is the action-dependent transition kernel. Both reward and transition functions are unknown.

At round $t$, the grid may activate at most $N_t$ data centers:


\begin{equation}
\max_{\pi}\;
\mathbb E_{\pi}\!\left[
\sum_{t=0}^{\infty}\beta^t
\sum_{i=1}^{\mathcal N} r_{i,t}a_{i,t}
\right],
\qquad
\sum_{i=1}^{\mathcal N}a_{i,t}\leq N_t .
\label{objective3}
\end{equation}

This learns each arm's reward and action-dependent transition dynamics using Gaussian and Dirichlet--Categorical posteriors, respectively, and computes Whittle indices from the sampled models. We consider both one-dimensional states and two-dimensional product states comprising job-queue states and operation types, as detailed in a later section. The formal definition of the Whittle index and the verification of indexability are provided in \textit{Reward functions and index-ability}.






\begin{figure*}[h]
\centering
     \includegraphics[width=\linewidth]{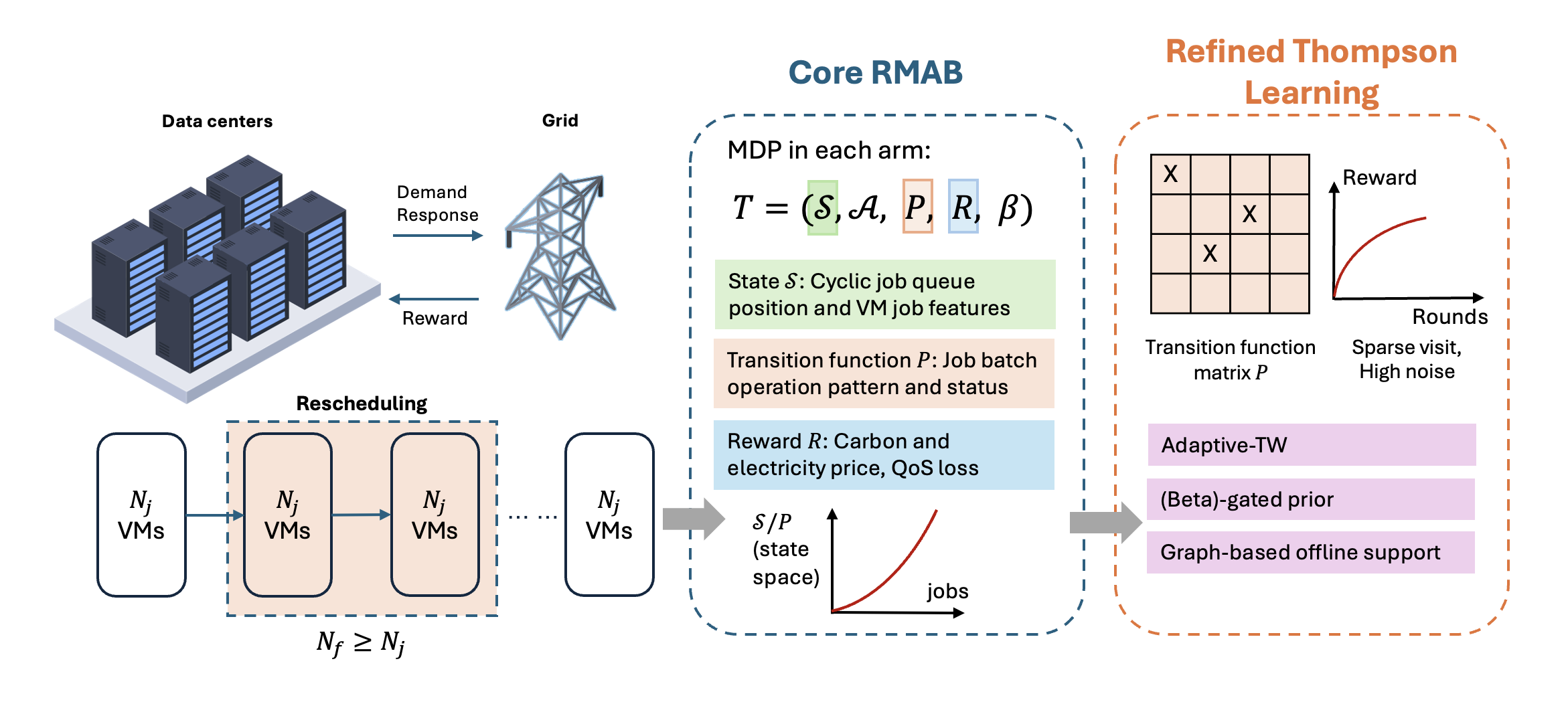}
     \caption{RACER+ framework consists of three parts: (a) Batched VM job creation, (2) Core CRMAB framework, and (3) Refined Thompson learning strategies.}
     \label{framework}
\end{figure*}

\subsection{Core Strategies for Arm Selection}

\paragraph{Contextual bandit and Whittle index} As a model-free baseline, we use a contextual bandit that conditions only on current batch features and ignores state transitions. For each arm $i$, the context $x_i(t)\in\mathbb{R}^n$ includes features such as computing/memory usage and power consumption. We assume a linear reward model,
\begin{equation}
r_{i,t}=x_i(t)^\top\theta_i+\varepsilon_{i,t},
\qquad
\varepsilon_{i,t}\sim\mathcal N(0,\sigma^2),
\end{equation}
and maintain a Bayesian posterior over $\theta_i$. At each round, we sample $\tilde\theta_i(t)$ and compute
\begin{equation}
\tilde\mu_i(t)=x_i(t)^\top\tilde\theta_i(t).
\end{equation}
The policy activates the $N_t$ arms with the highest sampled scores, whereas the Whittle-index policy selects arms according to their state-dependent indices.

\paragraph{Thompson--Whittle (TW)} 
For each arm $i$ at time $t$, the state
$s_i(t):=\psi(x_i(t))\in\mathcal S_i$ is obtained from its context.
The unknown model $\theta_i=(r_i,P_i^1,P_i^0)$ is learned using
Gaussian reward and Dirichlet--Categorical transition posteriors.
At each step, the algorithm samples a model from the posterior and
solves the corresponding subsidy MDP to compute the Whittle index.
Weakly informative priors improve learning in sparsely visited states
while remaining adaptive to new observations.

\begin{align}
\tilde V_{i,\lambda}^{(t)}(s)
= \max \Big\{
&\tilde r_i(s)
+ \beta\sum_{s'} \tilde P_i^1(s,s')
  \tilde V_{i,\lambda}^{(t)}(s'), \nonumber\\
&\lambda
+ \beta \sum_{s'} \tilde P_i^0(s,s')
  \tilde V_{i,\lambda}^{(t)}(s')
\Big\}.
\label{MDP_subsidy}
\end{align}

Then, the Whittle index is computed as,

\begin{align}
W_i(s) = \inf \Bigg\{ \lambda :\;
&\lambda + \beta \sum_{s'} P_i^0(s,s') V_{i,\lambda}(s')\ge \nonumber \\ & r_i(s) + \beta \sum_{s'} P_i^1(s,s') V_{i,\lambda}(s')
\Bigg\} \label{eq:subsidy_mdp_compute}
\end{align}

\subsubsection{Refined algorithmic variants}

The aforementioned TW algorithm exhibits limitations under stress conditions, particularly when the number of learning rounds is limited or during early stages, due to sparse observations in visited states and high uncertainty in unvisited ones. To address these issues, we develop the following refined algorithmic variants.

\paragraph{Adaptive-TW} The \textit{Adaptive-TW} mixes the greedy exploration strategy such as UCB with the index-based policy to ensure exploration-exploitation balance through the learning rounds. Let \(\widetilde W_i(s)\) denote \textit{Thompson-sampled} Whittle score for arm \(i\) in observed state \(s\), and let \(U_i(s,t)\) denote the UCB reward score computed from the state-level reward posterior. The refined mixer uses a visit-dependent trust weight clipping between the upper $\tau_{\max}$ and lower trust floor $\tau_{\min}$ is given by,

\begin{equation}
\tau_i(s,t)
=
\mathrm{clip}\!\left(
\frac{O_i(s,t)}
{O_i(s,t)+2+\sqrt{|\mathcal{S}|}},
\tau_{\min},\tau_{\max}
\right)
\label{eq:adaptive_trust}
\end{equation}

where \(O_i(s,t)\) is the number of observed transitions from state \(s\) of arm \(i\), and the constant-term \(2 + \sqrt{|\mathcal{S}|}\) encodes the size of state space. In high-noise state settings, calibration may require a larger floor value to ensure reliable trust in the Whittle index. The adaptive score mixing the sampled Whittle index \(\widetilde W_i(s, t)\) and UCB strategy \(U_i(s,t)\) is
\begin{equation}
A_i(s,t)
=
\tau_i(s,t)\,z\!\left(\widetilde W_i(s,t)\right)
+
\bigl[1-\tau_i(s,t)\bigr]\,z\!\left(U_i(s,t)\right)
\label{eq:adaptive_mixing}
\end{equation}
where the function \(z(\cdot)\) standardizes scores across arms before ranking. Thus, the policy is local-first in weakly visited states and Whittle-first once transition evidence accumulates.

\subsection{Domain-knowledge-enriched refined strategies}

We proposed a set of domain-knowledge-enriched refined strategies which leverage historical sample or computing resource constraints.

\paragraph{Domain-informed transition refinement.} Job queues provide structural information while preserving scheduling privacy. We blend the posterior transition sample $\widehat P_i^a$ with a domain prior $P_{0,i}^a$, optionally masking infeasible transitions:
\begin{equation}
\widetilde P_i^a(s,\cdot)=
(1-\gamma_i^a)\widehat P_i^a(s,\cdot)
+\gamma_i^a P_{0,i}^a(s,\cdot),
\label{eq:mixed_transition}
\end{equation}
where $\gamma_i^a$ decays with observations. A beta-gated variant randomizes $\gamma_i^a$ to reduce conservativeness under noisy contexts. Full prior construction and beta-gate details are in \textit{Refined Thompson-Learning Strategies}.


\section{Case Study}
\label{Case_Study}

\subsection{Experiment settings}
\label{yifu:numerical_evaluations}
We conduct three types of experiments, with configurations summarized in Table \ref{yifu:tab:settings}. 
In the baseline setting which simulates the normal grid conditions, the state space is simple and the number of rounds is large. In the transition-stress setting, the state space increases from 20 to 100 states, while the number of rounds is reduced to 200, allowing us to evaluate the refinement techniques for learning sparse transitions under limited online observations. In real-world simulations, the state space is further expanded to two-dimensional state space. For each experiment type, we run 10 random seeds on VM job sampling. We also perform the hyperparameter sweeps to identify optimal values, such as the trust floor $\tau_{min} \in (0, 0.1]$ and beta gate $G^a_i$.

\begin{table}[!h]
\centering
\caption{Experimental settings used in refinement ablation.}
\label{yifu:tab:settings}
\resizebox{\columnwidth}{!}{%
\begin{tabular}{lccc}
\toprule
& Baseline & \makecell{Transition \\ stress} & \makecell{Real-world \\simulations} \\
\midrule
Arms        & 5 & 5 & 8 \\
Budgets     & 1 & 1 & 3 \\
States      & 8 & 8--100 & 80 \\
Batch size  & 5 & 5 & 5 \\
Rounds      & 1000 & 200 & 250 \\
Features    & (a)--(d) & (a)--(d) &
\makecell[c]{(a)--(d), and \\ operational types} \\
Data sets    & \makecell{Microsoft Azure \\ VM Dataset} & \makecell{Microsoft Azure \\ VM Dataset} &
\makecell[c]{\makecell{MIT Super-cloud \\Datasets}} \\
\bottomrule
\end{tabular}%
}
\end{table}

We use Microsoft Azure VM dataset reported in ~\citep{cortez_resource_2017} and the calibrated MIT SuperCloud VM dataset ~\citep{the_mit_supercloud_mit_2021} as representative datasets for three experiments described above. Details on dataset processing and features construction are provided in \textit{Experimental Details}. Reproducible code is in Supplementary Materials, \textit{Reproducibility}

\section{Numerical Results}
\label{tab:results}
\subsection{Refinement ablation experiments}

\paragraph{Algorithm comparison} 

We compare the refined TW variants with the original TW approach. Performance is also bench-marked against an Oracle Whittle policy as an upper bound, which assumes access to \textit{true} states with the contextual information and simply performs Whittle-index lookup.

\begin{itemize}
    \item \textbf{State Thompson (ST):}
    a model-free contextual baseline that directly learns
    state--reward values from observations.

    \item \textbf{Thompson--Whittle (TW):}
    learns reward and transition models online through posterior
    sampling and computes Whittle indices from the sampled models.

    \item \textbf{Local UCB + TW and Global UCB + TW:}
    mixed policies that combine local or global UCB scores with
    Thompson--Whittle scores.

    \item \textbf{EXP4:}
    treats Global UCB, Local UCB, and TW as experts and learns
    an adaptive mixture over their recommendations.

    \item \textbf{Oracle Whittle:}
    assumes access to the true transition and reward models and
    therefore provides an upper benchmark.
\end{itemize}

\begin{figure*}[!h]
    \centering
    \includegraphics[width=\linewidth]{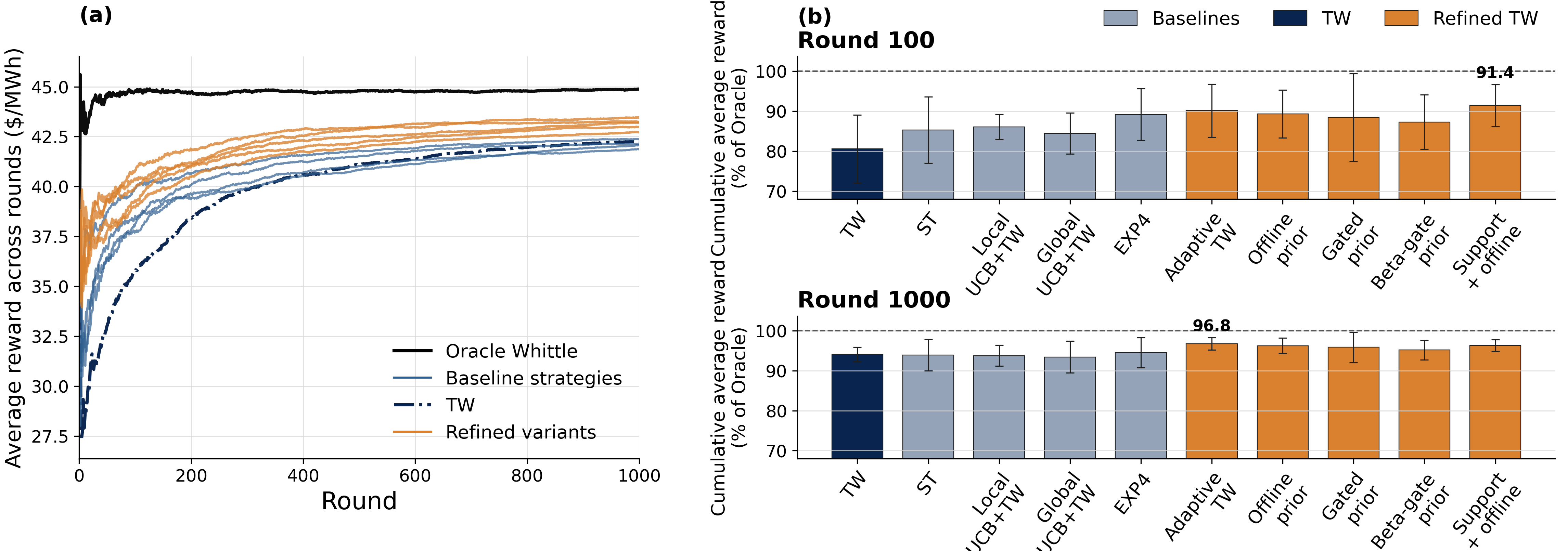}
    \caption{Baseline learning comparison (a) comparison between the refined variants and baseline arm-selection strategies over 1{,}000 rounds, and (b) their performance snapshots at 100 and 1{,}000 rounds. Refined TW variants are more stable early in learning and remain competitive after transition estimates become well learned. Bars report mean oracle-normalized cumulative reward over 10 seeds with seed-level standard deviations.}
    \label{yifu:fig:overall_baseline_comparison_two_groups}
\end{figure*}

Figure~\ref{yifu:fig:overall_baseline_comparison_two_groups} shows (a) comparison between the refined variants and baseline arm-selection strategies over 1{,}000 rounds, and (b) their performance snapshots at 100 and 1{,}000 rounds. The refined TM variants are evaluated against five aforementioned baseline strategies. All the strategies are preformed atop the optimal hyperparameter sweep. Across 1{,}000 rounds, the refined variants in orange consistently outperform the baseline strategy family. In Figure~\ref{yifu:fig:overall_baseline_comparison_two_groups}(b), the baseline methods perform worse and heterogeneously in the early learning stage, whereas the refined variants remain stable, with all variants achieving over 85\% of the oracle reward. Adaptive TW with the support or offline prior is the best strategy and attain 91.4\% of the oracle reward. In the later stage at 1,000 rounds, adaptive TW is the best and attain 96.8\% of the oracle on average. 

However, strong support constraints or informative priors can become overly conservative once transitions are well learned (e.g., at 1,000 rounds), reducing reward relative to Adaptive TW. Transition \(L_1\) error and off-support leakage show that deterministic offline support performs best in sparse-data regimes, while gated priors provide softer regularization that improves with more data. See Supplementary Material, \textit{Additional Ablation Results}.

\subsection{Contextual-noise stress tests}

We next stress the learning problem by varying state-space size
$|\mathcal S|\in\{8,20,50,100\}$ and contextual corruption
$\rho\in\{0,0.1,0.2,0.3\}$ with only 100 learning rounds. With probability $\rho$, an arm's observed state is replaced by a sampled state from Gaussian distribution as noise. This produces 16 combinations of dimensionality and observation noise. Every refined variant beats TW in all sixteen combinations by \(15\)--\(34\)\% of oracle, with the strong confidence validated by $p$-tests. Refined policies are also competitive with EXP4. The complete table with confidence intervals and variant selected in each setting are moved to \textit{Contextual-Noise Robustness Tests}.

\subsection{Real-world simulations}
We further evaluate the proposed refinements on real-world-scale AI workloads together with locational electricity prices during late-afternoon peak hours in Texas. Each arm corresponds to a data center with a two-dimensional state $s=(q,o)$, where $q$ is the job-queue state and $o$ the job operation type. Theorem~\ref{Indexability_inheritance} in Appendix, \textit{Indexability inheritance}, proves that indexability is preserved when the reward is scaled by a workload-type-dependent factor tied to $o$. This permits a finer-grained classification of heterogeneous workloads and the offline construction of the graph-based prior. We construct a graph-based prior considering the batch-level computing resource limit (core hour) in each data centers, where infeasible rescheduling options are masked. 

\begin{figure}[!h]
    \centering
    \includegraphics[width=\linewidth]{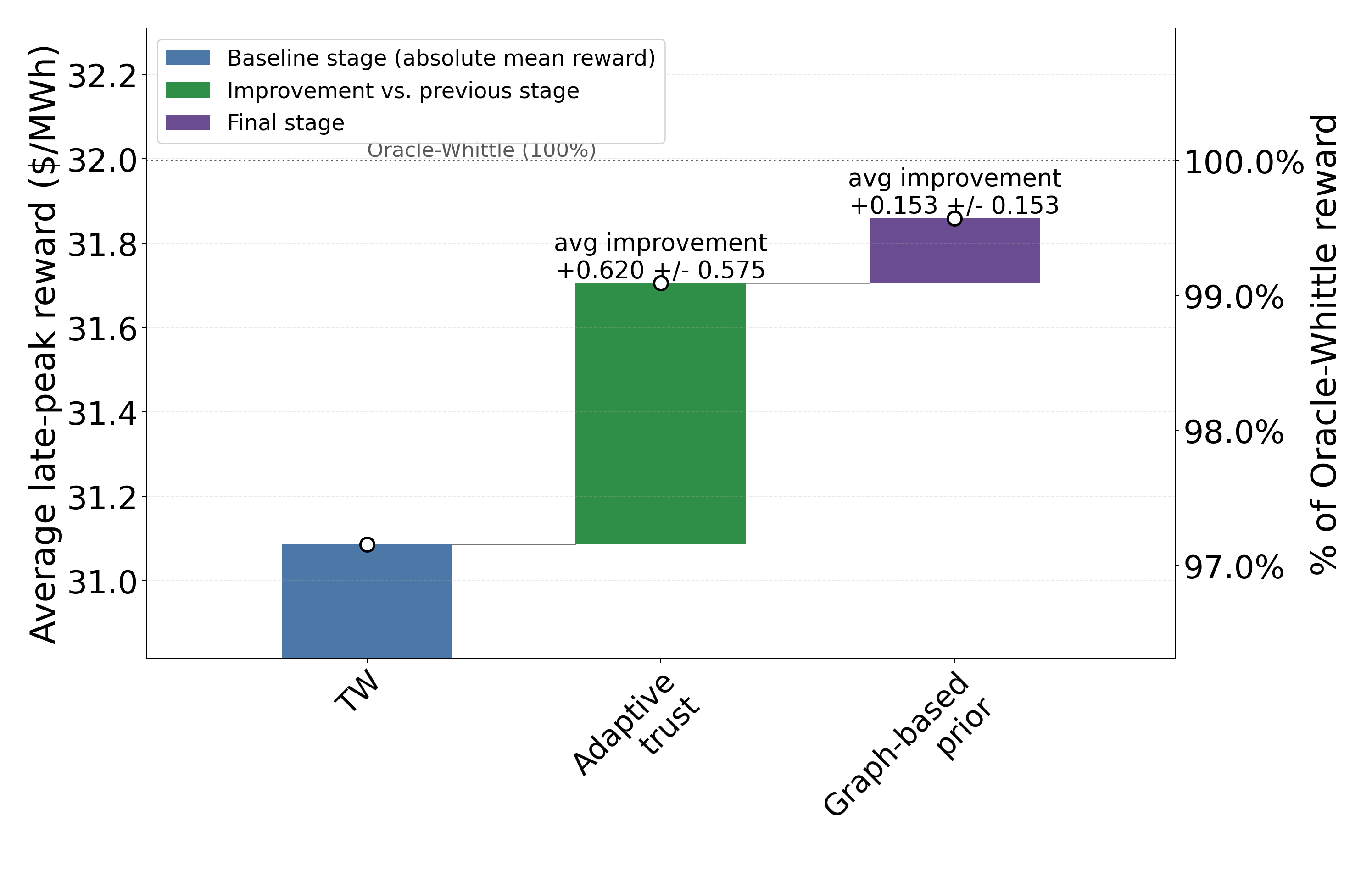}
    \caption{The refinement-strategy waterfall, averaged over 10 seeds; error bars indicate standard deviations, and average improvements are reported above the bars.} \label{fig:overall_baseline_comparison}
\end{figure}

Figure~\ref{fig:overall_baseline_comparison} reports the reward gains from refinements under batch level core hour constraints. Based on the TW, when the adaptive trust applies, the average reward across 600 rounds improves by \$0.620/MWh. Furthermore, once the graph-based prior is applied with masked infeasible paths, this brings a further \$0.153/MWh gain. Full experiments results are details in \textit{Support refinement}.

\section{Conclusions and Ethical Considerations}
\label{section:discussion}

We develop an adaptive bandit framework that models data-center flexibility as online learning over partially observed job-queue dynamics. Adaptive trust mixing improves early exploration, while domain-informed transition priors provide structural guidance under sparse or noisy observations. Experiments demonstrate strong early-learning performance and consistent improvements over TW across the contextual-noise stress sweep. We further show that an offline graph-based prior can encode scheduling preferences and mask infeasible paths subject to data-center resource constraints. Deployment, however, requires workload-specific calibration while respecting latency and service requirements. More broadly, our framework promotes reproducible AI workload and energy traces, which are essential for evaluating computing flexibility as a reliable grid service and advancing sustainable AI without exposing sensitive internal scheduling information.

Deployment relies on contextual measurements of hardware-level energy efficiency and of system-software effects on AI workloads. Initiatives such as MLCommons have standardized such benchmarks; we hope RACER+ further encourages open, reproducible AI/ML workload datasets for research on data-center demand flexibility.

\section{Acknowledgment}

Yifu Ding is supported by the MIT Shell Energy Scholar Program at the
MIT Energy Initiative. Zixi Chen is supported by MIT Undergraduate Research Opportunities Program (UROP). We would like to thank Tina Yuhan Ma's help in processing MIT Supercloud Datasets at MIT Energy Initiative UROP.

\bibliography{CDC_data_center_bandit_TCCML}


\newpage

\section{Appendix}

\subsection{Reward functions and index-ability}
\label{app:method_indexability}

\subsubsection{Rescheduling and Reward Functions}
\label{app:reward}

Each arm represents a data center with a sequence of VM jobs. At state \(s\), let \(B_s\) denote the default batch and \(W_s\) a short look ahead window. Under activation, a batch \(C_s \subset W_s\) of the same size as \(B_s\) is selected to minimize power consumption. Let \(P_j\), \(Q_j\), and \(I_j\in\{0,1\}\) denote the power cost, QoS delay cost, and interaction indicator of job \(j\), respectively. The default and selected power consumptions are
\[
P_{\mathrm{def}}(s)
=
\sum_{j\in B_s} P_j,
\qquad
P_{\mathrm{chosen}}(s)
=
\sum_{j\in C_s} P_j.
\]
The delayed jobs are defined as \(D_s = B_s \cap [C_s]^c\), with QoS penalty
\[
C_{\mathrm{delay}}(s)
=
\sum_{j\in D_s} I_j Q_j.
\]
The active reward for arm \(i\) in state \(s\) is
\[
\begin{aligned}
&r_i(s)
=
\max\Big\{
0,\;
\lambda_{\mathrm{LMP}}
\big[
P_{\mathrm{def}}(s)
-
P_{\mathrm{sel}}(s)
\big] -
\lambda_{\mathrm{delay}}
c_{\mathrm{delay}}(s)
\Big\}
\end{aligned}
\]
where \(\lambda_{\mathrm{LMP}}= \$0.03/\mathrm{kWh}\) denotes the LMP and \(\lambda_{\mathrm{delay}}\) is the delay-cost multiplier. Passive arms receive zero reward, while active arms capture the electricity cost savings from VM rescheduling, but are penalized by QoS degradation due to delays in interactive jobs.

\subsubsection{Real-World Evaluation with AI Workloads and Electricity Prices}
\label{tab:carbon_and_electricity_prices}

To consider each arm represents a data center and each
state is a product state
\[
    s=(q,o),
\]

where \(q\) is the state of job queue, and \(o\) is the operation type. For example, the operation dimension $o:=0$ denote the batch-flexible training job and $o:=1$ denote the interactive-heavy inference jobs  

\[
\begin{aligned}
r_i(q,o)
=
&\max\Bigl\{
0,\; 
\lambda_{\mathrm{LMP}} \cdot
\epsilon_o \sum_{o \in \mathcal{O}}\left[P_{\mathrm{def}}(q, o)
- P_{\mathrm{chosen}}(q, o)\right] \\
& -\lambda_{\mathrm{delay}} (o)\,
c_{i}(q,o)
\Bigr\}
\end{aligned}
\]



\subsection{Whittle-Index Computation and Indexability}
\label{app:indexability}

\cite{whittle_restless_1988} proposed an index policy to strategically calculate the marginal reward for activating each arm independently based on Lagrangian relaxation, which is as follows, 

\textit{Definition 1 (Whittle index \cite{whittle_restless_1988})}: 
For each arm $i$ at state $s$, consider a single-arm MDP with a passivity
subsidy $\lambda$. Let $V_{i,\lambda}(s)$ be the optimal discounted value.
The Whittle index $W_i(s)$ is the smallest subsidy that makes passivity
optimal under indexability:
\begin{align}
W_i(s) = \inf \Bigg\{ \lambda :\;
&\lambda + \beta \sum_{s'} P_i^0(s,s') V_{i,\lambda}(s')\ge \nonumber\\ &r_i(s) + \beta \sum_{s'} P_i^1(s,s') V_{i,\lambda}(s')
\Bigg\} \label{eq:subsidy_mdp_compute}
\end{align}

\textit{Proof of indexability}: Under indexability, arms are ranked by $W_i(s)$ and the highest-index arms are
activated each round. Equivalently, define
\begin{align}
Q_{i,\lambda}^{1}(s)
&=
R_i(s,1)
+\beta \sum_{s'} P_i^{1}(s,s')V_{i,\lambda}(s'),
\\
Q_{i,\lambda}^{0}(s)
&=
R_i(s,0)+\lambda
+\beta \sum_{s'} P_i^{0}(s,s')V_{i,\lambda}(s').
\end{align}
For a given $\lambda$, the passive-optimal set is
\begin{equation}
\mathcal P_i(\lambda)
=
\{s\in\mathcal S_i: Q_{i,\lambda}^{0}(s)\ge Q_{i,\lambda}^{1}(s)\}.
\end{equation}

Arm $i$ is \emph{indexable} if $\mathcal P_i(\lambda)$ is nondecreasing in
$\lambda$ by set inclusion and expands from $\emptyset$ to $\mathcal S_i$ as
$\lambda$ increases from $-\infty$ to $+\infty$. Thus, for any
$\lambda_1\le \lambda_2$,
\[
\mathcal P_i(\lambda_1)\subseteq \mathcal P_i(\lambda_2).
\]
Under indexability, the Whittle index can also be written as
\begin{equation}
W_i(s)
=
\inf\{\lambda:\; Q_{i,\lambda}^{0}(s)\ge Q_{i,\lambda}^{1}(s)\}.
\label{eq:whittle_index}
\end{equation}
At each round $t$, the aggregator activates the $N_t$ arms with the largest
indices $W_i(s_{i,t})$.

\textit{Proposition 1 (Sufficient condition for indexability):}
For arm $i$, define
\begin{equation}
\Delta_{i,\lambda}(s)
=
Q_{i,\lambda}^{1}(s)-Q_{i,\lambda}^{0}(s),
\qquad s\in\mathcal S_i .
\end{equation}
If $\Delta_{i,\lambda}(s)$ is continuous and non-increasing in $\lambda$ for
every $s\in\mathcal S_i$, and there exist $\lambda_{\min}$ and
$\lambda_{\max}$ such that all states are active-optimal for
$\lambda\le \lambda_{\min}$ and passive-optimal for
$\lambda\ge \lambda_{\max}$, then arm $i$ is indexable.

\textit{Proof:}
Since $\Delta_{i,\lambda}(s)$ is continuous and non-increasing in $\lambda$,
each state switches from active-optimal to passive-optimal at most once. Hence,
\[
\mathcal P_i(\lambda)
=
\{s\in\mathcal S_i:\Delta_{i,\lambda}(s)\le 0\}
\]
expands monotonically with $\lambda$. The boundary conditions imply that
$\mathcal P_i(\lambda)$ evolves from $\emptyset$ to $\mathcal S_i$, so arm $i$
is indexable.

In our implementation, indexability is checked numerically by solving
\eqref{eq:subsidy_mdp_compute} over a subsidy grid and verifying that
$\mathcal P_i(\lambda)$ expands monotonically.

\subsection{Indexability under a Workload-Type Reward Multiplier}



\begin{theorem}[Indexability inheritance]
\label{Indexability_inheritance}
If the base arm is Whittle-indexable, then the product-state arm is
Whittle-indexable. Moreover,
\[
W(q,o)=\epsilon_o\bar W(q).
\]
\end{theorem}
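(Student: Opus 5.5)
The plan is to reduce the product-state subsidy MDP to a rescaled copy of the base subsidy MDP, one copy for each operation type $o$. I will make explicit the structural assumptions the statement implicitly uses. First, the operation type $o$ is invariant under both actions: $P^a((q,o),(q',o'))=\bar P^a(q,q')\,\mathbf 1\{o'=o\}$, and the $q$-dynamics do not depend on $o$. Second, the multiplier satisfies $\epsilon_o>0$. Third, the product reward factorizes as $r(q,o)=\epsilon_o\bar r(q)$, with passive reward $0$, where $\bar r$ is the base active reward. Positivity of $\epsilon_o$ is what lets the clipping $\max\{0,\cdot\}$ commute with the scaling, since $\max\{0,\epsilon_o x\}=\epsilon_o\max\{0,x\}$. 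This requires the delay term to scale with the same factor, i.e.\ $\lambda_{\mathrm{delay}}(o)$ absorbs $\epsilon_o$.

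\textbf{Step 1 (value scaling).} Fix $o$ and $\lambda$. Since $o$ never changes, the product subsidy MDP restricted to $\{(q,o):q\}$ is a closed single-arm MDP. I claim
\[
V_{\lambda}(q,o)=\epsilon_o\,\bar V_{\lambda/\epsilon_o}(q).
\]
To prove this, substitute the right-hand side into the Bellman equation \eqref{MDP_subsidy}. The active branch becomes $\epsilon_o[\bar r(q)+\beta\sum_{q'}\bar P^1(q,q')\bar V_{\lambda/\epsilon_o}(q')]$. The passive branch becomes $\epsilon_o[\lambda/\epsilon_o+\beta\sum_{q'}\bar P^0(q,q')\bar V_{\lambda/\epsilon_o}(q')]$. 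Because $\epsilon_o>0$, the factor pulls out of the $\max$, so the right-hand side is a fixed point. Uniqueness of the fixed point of the $\beta$-contraction then gives the identity. The same computation yields $Q^a_\lambda(q,o)=\epsilon_o\bar Q^a_{\lambda/\epsilon_o}(q)$, and hence $\Delta_\lambda(q,o)=\epsilon_o\bar\Delta_{\lambda/\epsilon_o}(q)$.

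\textbf{Step 2 (passive sets and indexability).} From the sign identity for $\Delta$ and $\epsilon_o>0$,
\[
\mathcal P(\lambda)=\bigcup_{o\in\mathcal O}\bigl\{(q,o):q\in\bar{\mathcal P}(\lambda/\epsilon_o)\bigr\}.
\]
Take $\lambda_1\le\lambda_2$. Then $\lambda_1/\epsilon_o\le\lambda_2/\epsilon_o$, so base indexability gives $\bar{\mathcal P}(\lambda_1/\epsilon_o)\subseteq\bar{\mathcal P}(\lambda_2/\epsilon_o)$ for each $o$, and the union inherits the inclusion. The boundary conditions transfer directly. As $\lambda\to-\infty$, every $\lambda/\epsilon_o\to-\infty$, so each slice is empty. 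As $\lambda\to+\infty$, each slice fills the whole base state space, and since $\mathcal O$ is finite the product passive set reaches $\mathcal S$. This establishes indexability in the sense of the definition preceding \eqref{eq:whittle_index}.

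\textbf{Step 3 (index formula).} By \eqref{eq:whittle_index},
\[
W(q,o)=\inf\{\lambda:q\in\bar{\mathcal P}(\lambda/\epsilon_o)\}.
\]
Substituting $\mu=\lambda/\epsilon_o$, which is a monotone bijection because $\epsilon_o>0$, gives $W(q,o)=\epsilon_o\inf\{\mu:q\in\bar{\mathcal P}(\mu)\}=\epsilon_o\bar W(q)$.

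I expect the main obstacle to be the modelling assumptions rather than the algebra. If $o$ can change under the dynamics, the slices couple, and the value no longer scales by a single factor. In that case I would fall back on Proposition~1: I would try to show that $\Delta_\lambda$ is non-increasing in $\lambda$ via a coupling argument, but the clean formula $W=\epsilon_o\bar W$ would generally be lost. I would therefore state the invariance of $o$, and the requirement $\epsilon_o>0$, as explicit hypotheses in the proof.
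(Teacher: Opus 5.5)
Your proposal is correct and follows the paper's proof essentially step for step. Both arguments use the value-scaling identity $V_\lambda(q,o)=\epsilon_o\bar V_{\lambda/\epsilon_o}(q)$ via contraction uniqueness, the induced scaling of $Q^a_\lambda$, the slice-wise decomposition of $\mathcal P(\lambda)$, the inherited monotonicity and boundary behaviour, and the change of variables $\mu=\lambda/\epsilon_o$ giving $W(q,o)=\epsilon_o\bar W(q)$. The paper's product Bellman equation implicitly assumes the hypotheses you state: $o$ invariant with $o$-independent $q$-dynamics, $\epsilon_o>0$, and a reward factorizing as $\epsilon_o\bar r(q)$, which requires the delay term to scale with $\epsilon_o$. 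Stating them explicitly is a worthwhile clarification rather than a deviation.
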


\begin{proof}
Under passivity subsidy $\lambda$, the product-state Bellman equation is

\begin{align}
V_\lambda(q,o)
=
\max\Bigg\{
&\epsilon_o r(q)
+\beta\sum_{q'} P^1(q,q')V_\lambda(q',o),
\nonumber\\
&\lambda
+\beta\sum_{q'} P^0(q,q')V_\lambda(q',o)
\Bigg\}.
\end{align}

Set $\mu=\lambda/\epsilon_o$. Substituting into the base Bellman equation yields
\[
V_\lambda(q,o)
=
\epsilon_o\,\bar V_{\lambda/\epsilon_o}(q),
\]
where uniqueness follows from the contraction property of the discounted Bellman operator. Consequently,
\[
Q^a_\lambda(q,o)
=
\epsilon_o\,\bar Q^a_{\lambda/\epsilon_o}(q),
\qquad a\in\{0,1\}.
\]
Since $\epsilon_o>0$, multiplication by $\epsilon_o$ preserves the ordering of the action values. Therefore,
\[
(q,o)\in\mathcal{P}(\lambda)
\quad\Longleftrightarrow\quad
q\in\bar{\mathcal{P}}\!\left(\frac{\lambda}{\epsilon_o}\right).
\]
It follows that the passive set of the product-state arm is
\[
\mathcal{P}(\lambda)
=
\bigcup_{o\in\mathcal{O}}
\left(
\bar{\mathcal{P}}\!\left(\frac{\lambda}{\epsilon_o}\right)
\times\{o\}
\right).
\]

$\lambda_1\le\lambda_2$, then
$\lambda_1/\epsilon_o\le\lambda_2/\epsilon_o$ for every $o$.
Indexability of the base arm therefore implies
\[
\mathcal{P}(\lambda_1)\subseteq\mathcal{P}(\lambda_2).
\]
The limits $\mathcal{P}(\lambda)\to\varnothing$ as $\lambda\to-\infty$ and
$\mathcal{P}(\lambda)\to\mathcal{Q}\times\mathcal{Q}$ as $\lambda\to+\infty$ are inherited
in the same way. Thus the product-state arm is indexable. Finally,
\[
W(q,o)
=
\inf\left\{
\lambda:
q\in\bar{\mathcal{P}}\!\left(\frac{\lambda}{\epsilon_o}\right)
\right\}
=
\epsilon_o\bar W(q).
\]
\end{proof}

\paragraph{Direct monotonicity form.}
Let
\[
\bar\Delta_\mu(q)
=
\bar Q^1_\mu(q)-\bar Q^0_\mu(q),
\qquad
\Delta_\lambda(q,o)
=
Q^1_\lambda(q,o)-Q^0_\lambda(q,o).
\]
Then
\[
\Delta_\lambda(q,o)
=
\epsilon_o\bar\Delta_{\lambda/\epsilon_o}(q).
\]
Therefore, if $\bar\Delta_\mu(q)$ is non-increasing in $\mu$, then
$\Delta_\lambda(q,o)$ is non-increasing in $\lambda$. This gives the
passive-set monotonicity directly, without enumerating the two-dimensional
passive set.

\subsection{Refined Thompson-Learning Strategies}
\label{app:refinements}

\subsubsection{Adaptive TW}
\label{app:adaptive_tw}

The Adaptive-TW policy combines the Thompson-sampled Whittle
index with an uncertainty-aware UCB score. The visit-dependent
trust coefficient is

\begin{equation}
\tau_i(s,t)
=
\operatorname{clip}\!\left(
\frac{O_i(s,t)}
{O_i(s,t)+2+\sqrt{|\mathcal{S}|}},
\tau_{\min},
\tau_{\max}
\right),
\label{eq:app_trust}
\end{equation}

where $O_i(s,t)$ denotes the number of observed transitions
from state $s$ for arm $i$.

The resulting arm-selection score is

\begin{equation}
A_i(s,t)
=
\tau_i(s,t)\,
z\!\left(\widetilde W_i(s,t)\right)
+
\left[1-\tau_i(s,t)\right]\,
z\!\left(U_i(s,t)\right),
\label{eq:app_adaptive}
\end{equation}

where $z(\cdot)$ standardizes scores across arms.

\subsubsection{Offline (prior) refinement.}
The offline refinement constructs a structured transition prior
$P_{0,i}^{a}(s,\cdot)$ for each data center $i$ from historical job
traces (per-job core-hours, measured energy, and VM category) together
with the known cyclic structure of the job queue. For every queue
state $s$, we roll the queue forward over a short lookahead horizon of
$B$ batches and enumerate a small set of \emph{dispatch paths}, each
path being a way of partitioning the upcoming $B\cdot(\text{batch
size})$ jobs into a batch to run now and batches to defer. For each
path we solve a small combinatorial sub-problem that selects the
run-now batch so as to maximize the immediate locational energy-cost
saving relative to the default (in-order) batch, i.e.
$\text{LMP}(s,i),\big(\textstyle\sum_{j\in\text{default}}p_j -
\sum_{j\in\text{selected}}p_j\big)$, where $p_j$ is job $j$'s power. Figure \ref{offline_prior_under_different_paths} show the scheduling selection under three scenarios. 
Each resulting candidate transition is assigned a non-negative weight
(uniform, power reduction, or the LMP-weighted saving above), and
$P_{0,i}^{a}(s,\cdot)$ is obtained by normalizing these weights over the
candidate set. The whole prior is built from logged data and the
queue's cyclic indexing, with no online interaction.

\subsubsection{Support refinement.}
The support refinement additionally defines a feasible-transition
graph. A dispatch path is admissible only if every batch along it
respects the per-batch core-hour capacity of the data center; paths
that violate this scheduling constraint are given zero weight and are
removed before normalization, so they carry no mass in
$P_{0,i}^{a}(s,\cdot)$ and no posterior probability can leak toward
them (Figure \ref{offline_prior_under_different_paths} (d) shows the infeasible path due to the core hour violation of computing nodes). When no admissible positive-weight candidate exists at a state,
the prior falls back to the least-power feasible grouping. Restricting
the support to transitions the scheduler can actually realize sharpens
the prior and keeps posterior updates consistent with the underlying
cyclic scheduling process.

\begin{figure*}[!h]
    \centering \includegraphics[width=0.7\linewidth]{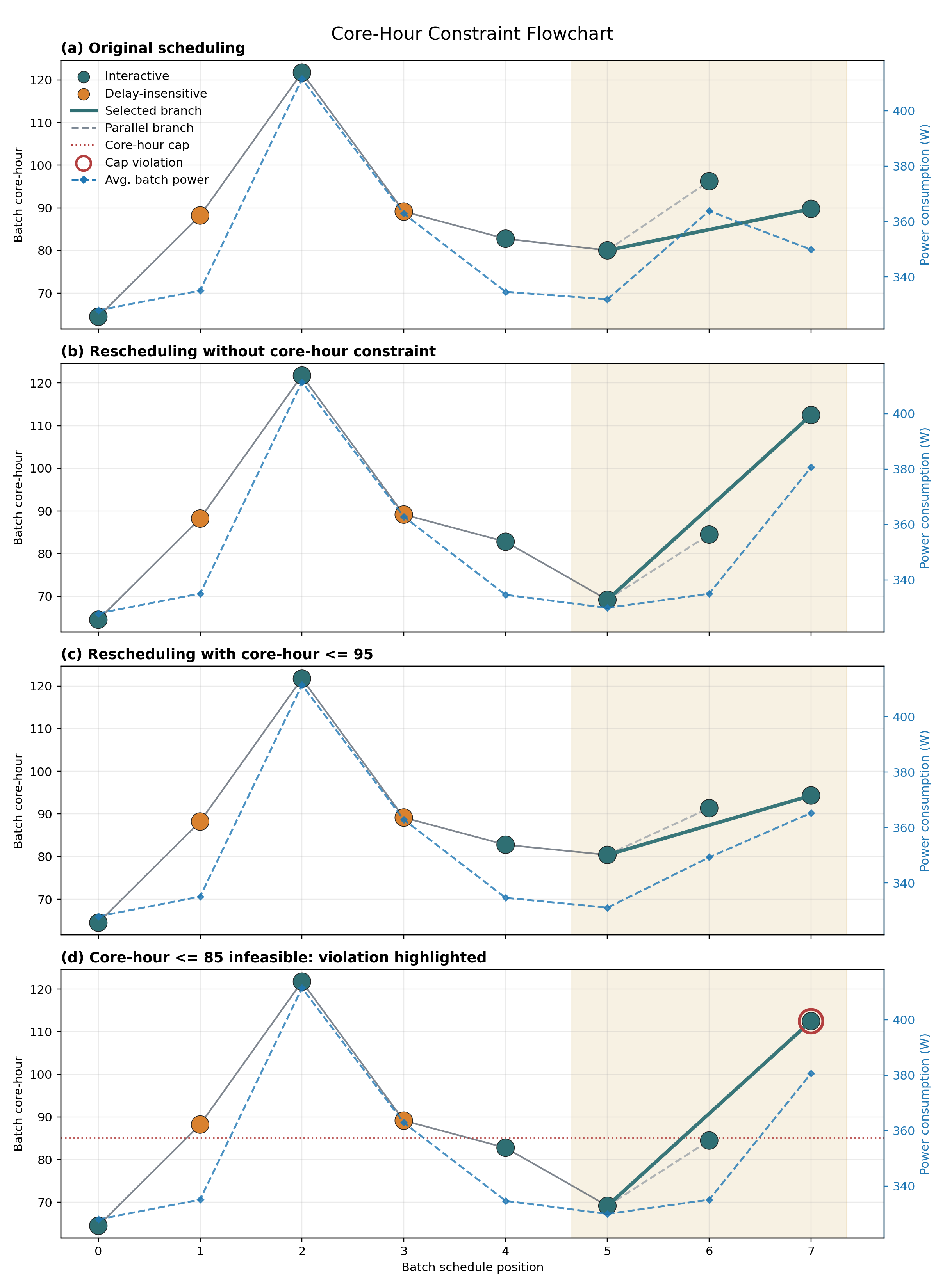}
    \caption{Job rescheduling under different scenarios for graph-based offline prior construction and updates: (a) Original scheduling, (b) Rescheduling with core-hour constraints, (c) Rescheduling with core-hour per batch less than 95, and  (d) Rescheduling with core-hour per batch less than 85, which is infeasible.}
    \label{offline_prior_under_different_paths}
\end{figure*}

Figure \ref{bandit_stragies_reward_compare} shows differences in rewards with and without offline support refinements when using varying five bandit strategies: EXP4, Local + global+ TW, Oracle Whittle, State Thompson, and TW only. With the weighted-updated offline refinement, all the Thompson-learning-based bandit strategies improve, while the Orcale whittle remains the same reward levels due to the pure Whittle index look-up.

\begin{figure*}[!h]
    \centering
    \includegraphics[width=0.6\linewidth]{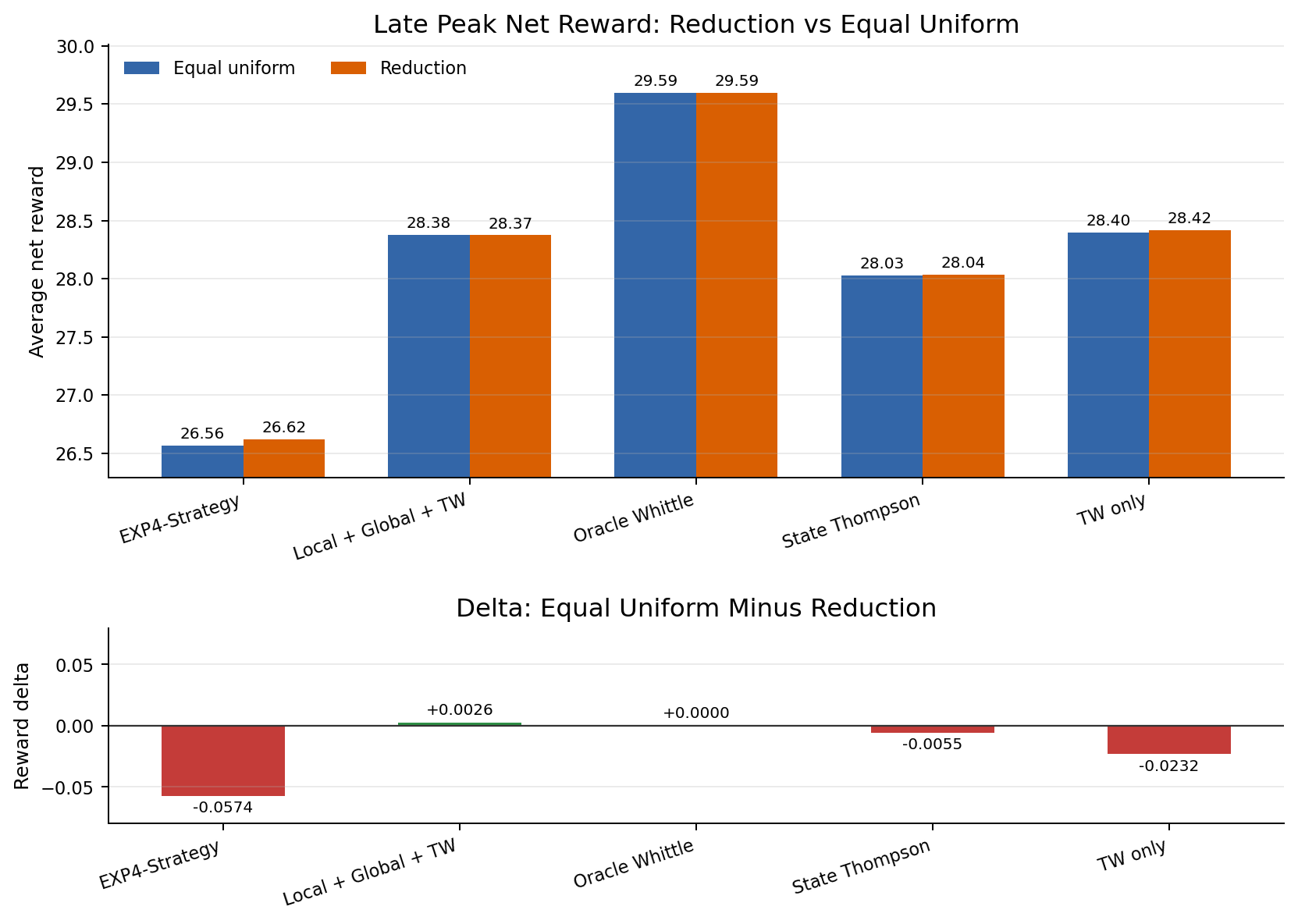}
    \caption{The differences of rewards with and without offline support refinements in varying five bandit strategies: EXP4, Local + global+ TW, Oracle Whittle, State Thompson, and TW only}
    \label{bandit_stragies_reward_compare}
\end{figure*}

\subsubsection{Gated Structural Prior}
\label{app:gated_prior}

Let $\widehat P_i^a(s,\cdot)$ denote a transition row sampled
from the Dirichlet--Categorical posterior and
$P_{0,i}^a(s,\cdot)$ denote the structural prior.
The transition model used for Whittle-index computation is

\begin{equation}
\widetilde P_i^a(s,\cdot)
=
\left[1-\gamma_i^a(s,t)\right]
\widehat P_i^a(s,\cdot)
+
\gamma_i^a(s,t)
P_{0,i}^a(s,\cdot),
\label{eq:app_gated_transition}
\end{equation}

where

\begin{equation}
\gamma_i^a(s,t)
=
\frac{2+\sqrt{|\mathcal{S}|}}
{2+\sqrt{|\mathcal{S}|}+O_i^a(s,t)}.
\label{eq:app_gate}
\end{equation}

Thus, the structural prior receives greater weight for
weakly visited state--action pairs and gradually vanishes as
transition observations accumulate.

\subsubsection{Beta-Gated Prior}
\label{app:beta_gate}

To reduce excessive conservativeness under sparse or noisy
observations, we also consider a stochastic gate:

\begin{equation}
G_i^a(s,t)
\sim
\operatorname{Beta}\!\left(
\max\{\kappa\gamma_i^a(s,t),\epsilon\},
\max\{\kappa[1-\gamma_i^a(s,t)],\epsilon\}
\right),
\label{eq:app_beta_gate}
\end{equation}

where $\kappa=20$ and $\epsilon=10^{-3}$.
The deterministic gate $\gamma_i^a(s,t)$ in
Eq.~\eqref{eq:app_gated_transition} is then replaced by
$G_i^a(s,t)$.


\subsection{Experimental Details}
\label{app:experiments}

\subsubsection{Datasets and Workload Construction}
\label{app:data}

We use Microsoft Azure VM dataset reported in ~\citep{cortez_resource_2017} and the calibrated MIT SuperCloud dataset ~\citep{the_mit_supercloud_mit_2021} as representative datasets for the three experiments described above. Table \ref{tab:dataset-comparison} summarizes the key parameters of two datasets. The Microsoft Azure VM dataset contains millions of lightweight workloads, primarily corresponding to Microsoft internal services and testing jobs. However, it does not include information on the underlying physical infrastructure or power consumption. We therefore use the utilization and hardware assumption to estimate power consumptions.  

\begin{table*}[!h]
\centering
\small
\begin{tabular}{p{0.2\textwidth}p{0.35\textwidth}p{0.35\textwidth}}
\hline
\textbf{Aspects} & \textbf{Microsoft VM Datasets} & \textbf{MIT Supercloud Datasets} \\
\hline

Labels
& CPU utilization, lifetime, core-hour, work types (interactive and delay-insensitive), Parties 
& GPU/HPC jobs with model labels, CPU utilization, GPU power, and energy \\

Work types
& Light-weight VM jobs
& DNN/HPC workload such as BERT, Inception, ResNet, VGG, and unlabeled traces \\

Power consumptions
& $4\times10^{-4}$ W to 52.67 W (Synthesis)
& 266.9 W to 6.3 kW (Measurements)\\

Reference &~\cite{cortez_resource_2017} 
&~\cite{the_mit_supercloud_mit_2021}\\
\hline

\end{tabular}
\caption{Comparison of the Microsoft VM and MIT Supercloud datasets}
\label{tab:dataset-comparison}
\end{table*}

\begin{table*}[!h]
\centering
\small
\begin{tabular}{p{2cm} p{3.5cm} p{3.5cm} p{3.5cm}}
\toprule
\textbf{Model family} &
\textbf{Variants} &
\textbf{AI workloads} &
\textbf{Reasons for inclusion} \\
\midrule

\textbf{BERT} &
\makecell[l]{bert-base-uncased\\
distilbert-base-uncased} &
\makecell{Natural language \\ processing} &
Transformer-based NLP workloads dominated by matrix multiplication, attention, reduction, and elementwise operations. \\

\addlinespace

\textbf{ResNet} &
\makecell[l]{resnet50, resnet50\_v1.5\\
resnet101, resnet101\_v2\\
resnet152, resnet152\_v2} &
\makecell{Computer vision \\ (classification)} &
Provides multiple network depths and architectural revisions within a widely used CNN family. \\

\addlinespace

\textbf{VGG} &
\makecell[l]{vgg11\\
vgg16\\
vgg19} &
\makecell{Computer vision \\ (classification)} &
Represents simple sequential convolutional networks with increasing depth. \\

\addlinespace

\textbf{U-Net} &
\makecell[l]{U3-32, U3-64, U3-128\\
U4-32, U4-64, U4-128\\
U5-32, U5-64, U5-128} &
\makecell{Computer vision\\ (segmentation)} &
Represents encoder--decoder segmentation workloads with multiple architectural and model-size configurations. \\

\addlinespace

\textbf{Inception} &
\makecell[l]{inception3\\
inception4} &
\makecell{Computer vision \\ (classification)} &
Represents multi-branch CNN architectures operating at multiple spatial scales. \\

\bottomrule
\end{tabular}
\caption{Representative AI workloads from the MIT Supercloud datasets.}
\label{tab:mit_supercloud_models}
\end{table*}

The MIT Supercloud dataset is a large-scale monitoring dataset that contains a sample of labeled AI workload traces, scheduler logs, system metrics and high frequency GPU monitoring data. GPU measurements were collected at 100-millisecond intervals and include GPU utilization, memory utilization, and power drawn, providing detailed measurements of workload level power usage and energy consumption. AI workloads include labeled workloads in different model families and unlabeled workloads. Table \ref{tab:mit_supercloud_models} records representative AI workloads from the MIT Supercloud datasets in five model families. We use the labeled workloads sampled from five model families evenly. 

\subsubsection{Contextual feature construction}

From two datasets, we extracted the following features for each job batch for bandit learning strategies: (a) average power consumption, (b) CPU/GPU utilization, (b) average normalized core-hour, (c) fraction of delay-sensitive jobs, and (d) normalized state scaling between 0 and 1, which shows the position in the job queue. For MIT Supercloud, we manually assume the interactive or delay-insensitive job fractions. Figure \ref{fig:contextual_metrics} provides the contextual metrics used by MIT GPU workloads: Average CPU/GPU utilization, core hour, average power (W), measured energy of work traces, power saving percentage, and interactive share. The QoS cost is calaculated based on the interactive share of the job batch.

\begin{figure*}[!h]
    \centering
    \includegraphics[width=\linewidth]{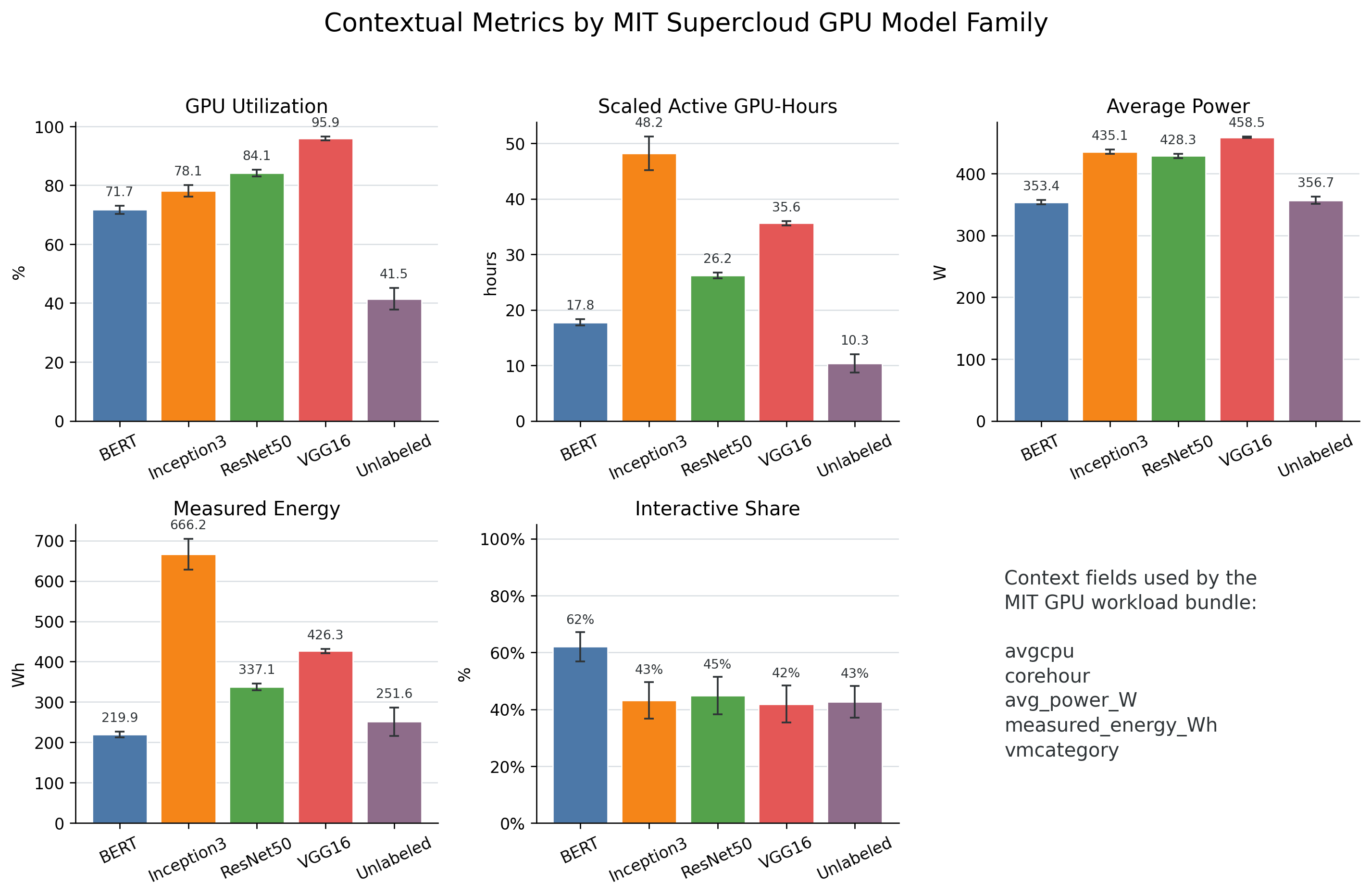}
    \caption{Contextual metrics used by MIT GPU workloads: Average CPU/GPU utilization, core hour, average power (W), measured energy of work traces, power saving index (percentage), quality of service costs, and interactive share}
    \label{fig:contextual_metrics}
\end{figure*}

\section{Additional Ablation Results}
\label{app:ablation}

\begin{table*}[!h]
\centering
\caption{Baseline-setting checkpoint diagnostics ranked by reward versus oracle. Rank 1 is best within each checkpoint round.}
\label{tab:baseline_setting_checkpoint_diagnostics_ranked}
\resizebox{0.7\linewidth}{!}{%
\begin{tabular}{rlrrrrr}
\toprule
Round & Method & Reward vs. oracle (\%) $\downarrow$ & L1 error & Leakage & Top-1 & Top-2 \\
\midrule
100 & Support + offline & 94.82 & 0.078 & 0.000 & 0.631 & 0.815 \\
100 & Offline prior & 93.83 & 0.108 & 0.026 & 0.570 & 0.771 \\
100 & Adaptive TW & 93.35 & 0.182 & 0.043 & 0.572 & 0.785 \\
100 & Gated prior & 92.09 & 0.109 & 0.026 & 0.598 & 0.775 \\
100 & Beta-gate prior & 91.64 & 0.110 & 0.026 & 0.573 & 0.744 \\
100 & EXP4 & 90.36 & 0.179 & 0.044 & 0.580 & 0.758 \\
100 & ST & 89.91 & 0.170 & 0.043 & 0.530 & 0.702 \\
100 & Local UCB+TW & 89.48 & 0.176 & 0.039 & 0.558 & 0.751 \\
100 & Global UCB+TW & 86.67 & 0.179 & 0.040 & 0.511 & 0.710 \\
100 & TW & 84.14 & 0.185 & 0.036 & 0.514 & 0.691 \\
\addlinespace
1000 & Adaptive TW & 97.76 & 0.111 & 0.019 & 0.779 & 0.912 \\
1000 & Beta-gate prior & 97.41 & 0.069 & 0.014 & 0.786 & 0.923 \\
1000 & EXP4 & 97.16 & 0.117 & 0.033 & 0.647 & 0.825 \\
1000 & ST & 96.91 & 0.107 & 0.027 & 0.633 & 0.826 \\
1000 & Gated prior & 96.74 & 0.070 & 0.014 & 0.745 & 0.891 \\
1000 & Support + offline & 96.66 & 0.055 & 0.000 & 0.765 & 0.912 \\
1000 & Offline prior & 96.53 & 0.070 & 0.013 & 0.757 & 0.906 \\
1000 & Local UCB+TW & 95.42 & 0.111 & 0.018 & 0.751 & 0.899 \\
1000 & TW & 94.68 & 0.115 & 0.018 & 0.735 & 0.897 \\
1000 & Global UCB+TW & 94.29 & 0.109 & 0.018 & 0.721 & 0.881 \\
\bottomrule
\end{tabular}%
}
\end{table*}


\subsection{Off-Support Leakage}
\label{app:leakage}

\begin{table*}[!h]
\centering
\caption{Baseline-setting diagnostic reductions from round 100 to round 1000. Rows are ordered by the round-1000 reward-versus-oracle ranking.}
\label{tab:baseline_setting_checkpoint_reductions}
\resizebox{0.7\linewidth}{!}{\begin{tabular}{lrrrrr}
\toprule
Strategy & Reward rank & $\Delta$ L1 & L1 red. (\%) & $\Delta$ leakage & Leakage red. (\%) \\
\midrule
Adaptive TW & 1 & 0.07117 & 39.1 & 0.023919 & 55.4 \\
Beta-gate prior & 2 & 0.04087 & 37.1 & 0.012222 & 46.7 \\
EXP4 & 3 & 0.06119 & 34.2 & 0.010547 & 24.1 \\
ST & 4 & 0.06320 & 37.1 & 0.015615 & 36.7 \\
Gated prior & 5 & 0.03854 & 35.5 & 0.011805 & 45.5 \\
Support + offline & 6 & 0.02308 & 29.4 & 1.0392e-07 & 46.1 \\
Offline prior & 7 & 0.03784 & 35.2 & 0.013506 & 51.4 \\
Local UCB+TW & 8 & 0.06449 & 36.7 & 0.020952 & 53.7 \\
TW & 9 & 0.07014 & 38.0 & 0.018581 & 51.0 \\
Global UCB+TW & 10 & 0.07067 & 39.4 & 0.021871 & 54.4 \\
\bottomrule
\end{tabular}}
\end{table*}

\subsection{Contextual-Noise Robustness Tests}
\label{app:noise}

We evaluate robustness under sparse observations and corrupted
contextual states. At each round $t$, the true state $s_{i,t}$
is replaced by a uniformly sampled state with probability
$\rho$; otherwise, it is observed correctly. We consider

\[
|\mathcal{S}|
\in
\{8,20,50,100\},
\qquad
\rho
\in
\{0,0.1,0.2,0.3\},
\]

resulting in 16 stress-test settings. Each experiment uses
100 learning rounds. Performance is reported as cumulative
reward normalized by the Oracle Whittle benchmark.

\begin{table*}[!h]
\centering
\caption{Robustness to contextual noise. For each $(|S|,\rho)$ setting, we report the best baseline, TW, and the best refined TW variant. Margins compare the best refined variant against the baseline and TW.}
\label{tab:sweep_a_real_best_beta_gate}
\resizebox{\linewidth}{!}{%
\begin{tabular}{cclccclcc}
\toprule
$|S|$ & Noise & Best baseline name & Best base & TM--TW & Best refined & Best refined variant & Margin vs baseline & Margin vs TW \\
\midrule
8 & 0.0 & EXP4 & 88.27 & 80.53 & 95.50 & \textbf{Adaptive TM--TW + support/offline prior} & +7.22$\pm$7.86$^{*}$ & +14.54$\pm$9.62$^{**}$ \\
8 & 0.1 & EXP4 & 89.83 & 82.88 & 90.86 & \textbf{Adaptive TM--TW} & +1.03$\pm$9.01 & +17.83$\pm$12.78$^{**}$ \\
8 & 0.2 & EXP4 & 78.45 & 74.67 & 87.74 & \textbf{Adaptive TM--TW} & +9.30$\pm$17.58 & +12.65$\pm$12.22$^{**}$ \\
8 & 0.3 & EXP4 & 86.09 & 71.79 & 87.77 & \textbf{Adaptive TM--TW + gated prior} & +1.67$\pm$8.23 & +18.78$\pm$14.60$^{**}$ \\
\addlinespace
20 & 0.0 & EXP4 & 79.09 & 63.20 & 86.64 & \textbf{Adaptive TM--TW + support/offline prior} & +7.55$\pm$12.31 & +21.86$\pm$14.81$^{**}$ \\
20 & 0.1 & EXP4 & 75.17 & 59.86 & 81.40 & \textbf{Adaptive TM--TW} & +6.22$\pm$11.83 & +24.95$\pm$13.18$^{***}$ \\
20 & 0.2 & EXP4 & 74.47 & 63.39 & 78.67 & \textbf{Adaptive TM--TW} & +4.20$\pm$9.20 & +22.08$\pm$17.05$^{**}$ \\
20 & 0.3 & EXP4 & 70.19 & 54.79 & 77.26 & \textbf{Adaptive TM--TW + gated prior} & +7.07$\pm$15.03 & +25.64$\pm$15.25$^{***}$ \\
\addlinespace
50 & 0.0 & EXP4 & 79.97 & 58.09 & 82.04 & \textbf{Adaptive TM--TW + offline prior} & +2.07$\pm$9.70 & +32.88$\pm$13.45$^{***}$ \\
50 & 0.1 & EXP4 & 72.81 & 58.54 & 77.55 & \textbf{Adaptive TM--TW + offline prior} & +4.74$\pm$14.36 & +25.05$\pm$10.79$^{***}$ \\
50 & 0.2 & EXP4 & 74.44 & 58.10 & 77.03 & \textbf{Adaptive TM--TW} & +2.59$\pm$11.23 & +22.72$\pm$11.85$^{***}$ \\
50 & 0.3 & EXP4 & 64.98 & 54.25 & 73.78 & \textbf{Adaptive TM--TW + gated prior} & +8.79$\pm$13.86 & +22.32$\pm$9.83$^{***}$ \\
\addlinespace
100 & 0.0 & EXP4 & 72.40 & 55.50 & 79.21 & \textbf{Adaptive TM--TW + gated prior} & +6.81$\pm$8.06$^{*}$ & +31.83$\pm$15.06$^{***}$ \\
100 & 0.1 & EXP4 & 71.28 & 53.51 & 77.68 & \textbf{Adaptive TM--TW + gated prior} & +6.40$\pm$7.83$^{*}$ & +34.02$\pm$10.07$^{***}$ \\
100 & 0.2 & EXP4 & 73.66 & 43.67 & 72.84 & \textbf{Adaptive TM--TW + support/offline prior} & -0.82$\pm$12.48 & +30.24$\pm$14.65$^{***}$ \\
100 & 0.3 & EXP4 & 63.94 & 51.39 & 71.08 & \textbf{Adaptive TM--TW + gated prior} & +7.14$\pm$11.97 & +21.28$\pm$12.32$^{***}$ \\
\bottomrule
\end{tabular}%
}
\vspace{0.25em}
\footnotesize{Margins are mean$\pm$one standard deviation over 10 paired seeds; $^{*}p<0.05$, $^{**}p<0.01$, $^{***}p<0.001$ by paired $t$-test.}
\end{table*}

\subsection{Real-World Evaluation with AI Workloads and Electricity Prices}
\label{app:noise}

Electricity prices are taken from \citep{noauthor_electricity_2026} eight data centers in Texas area, which is presented in Figure. \ref{fig:electricity_prices}. The ERCOT LMP analysis was separated into two parts, in which four of operating hyperscale data centers are chosen from the Northwestern region with low LMPs due to high renewable deployment and four of small-scale operating data centers were chosen from the Urban South region. We used the electricity prices in the peak hour (i.e., 8pm to 9pm).

\begin{figure*}[!h]
    \centering
    \includegraphics[width=0.8\linewidth]{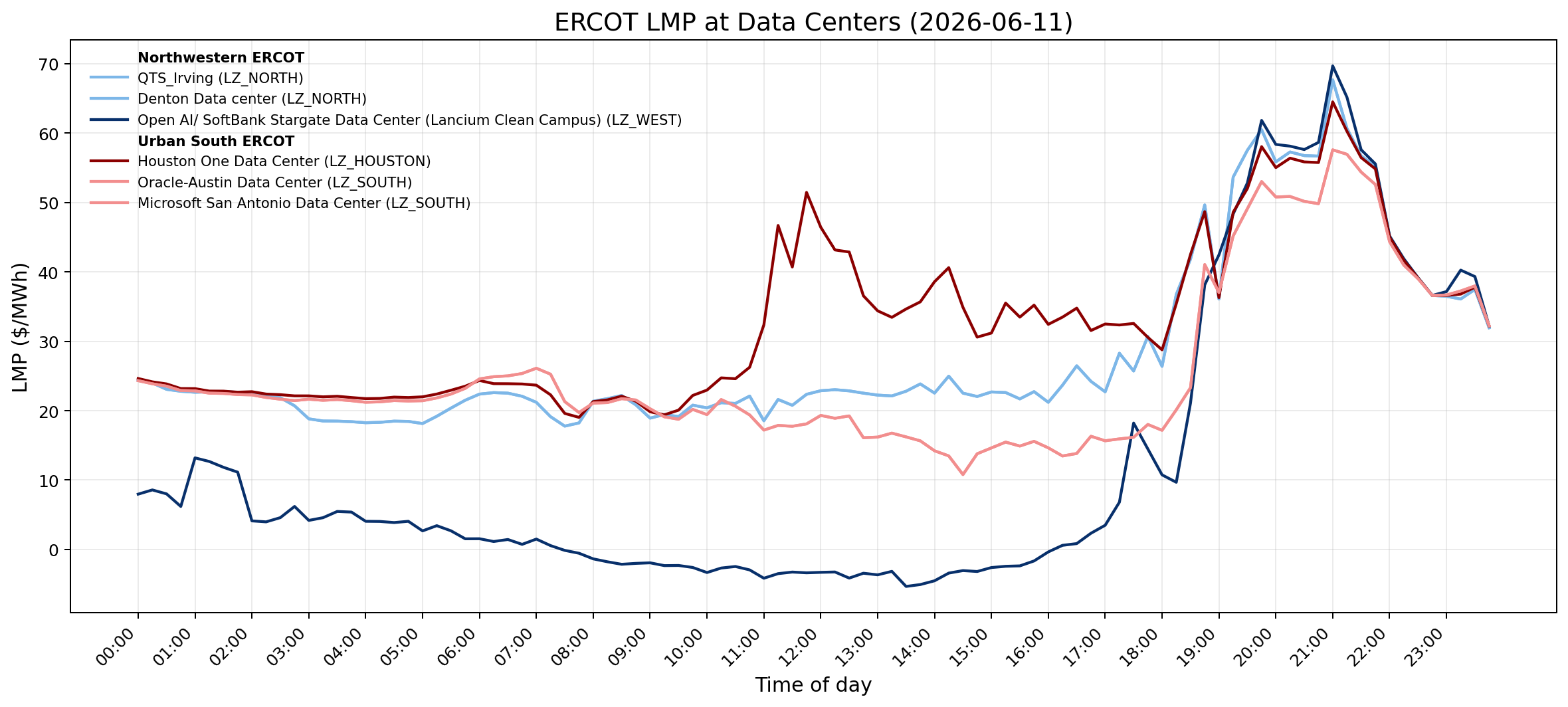}
    \caption{ERCOT locational marginal prices (LMPs) at eight selected data centers. LMPs are divided into two families: Northwestern ERCOT and Urban South ERCOT}
    \label{fig:electricity_prices}
\end{figure*}



\subsubsection{Implementation and Hyperparameters}
\label{best_beta}

We sweep the warm-up rounds for adaptive trusts to ensure the best improvement in refinement strategies. Table \ref{tab:warmup-sweep} shows the sweep results and the optimal warm-up rounds of 10 is used for experiments. 

\begin{table*}[!h]
\centering
\caption{Warm-up sweep for the late-peak refinement waterfall.
The adaptive Local+Global+TW mixture only departs from the equal-weight TW policy at a
10-round warm-up; longer warm-ups pin it to TW, making the trust ceiling $\rho_{\max}$
irrelevant.}
\label{tab:warmup-sweep}
\resizebox{0.8\linewidth}{!}{%
\begin{tabular}{ccccccc}
\toprule
Warm-up & $\rho_{\max}$ & TW only & Adaptive 
& $\Delta$ (Adap.\,$-$\,TW) & \% Oracle & Win rate \\
(rounds) & & (\$/MWh) & (\$/MWh) & (\$/MWh) & & \\
\midrule
\textbf{10} & \textbf{0.98} & \textbf{30.604} & \textbf{31.774} 
& \textbf{+1.170} & \textbf{99.42\%} & \textbf{0.50} \\
20 & 0.98 & 31.620 & 31.620 & 0.000 & 98.94\% & 0.00 \\
40 & 0.98 & 31.291 & 31.291 & 0.000 & 97.91\% & 0.00 \\
40 & 0.95 & 31.291 & 31.291 & 0.000 & 97.91\% & 0.00 \\
40 & 0.90 & 31.291 & 31.291 & 0.000 & 97.91\% & 0.00 \\
\bottomrule
\end{tabular}%
}
\end{table*}

\subsection{Additional Discussion}
\label{app:discussion}

\subsubsection{Computational Considerations}

All experiments are conducted on an Apple M1 Pro with 14 GB of memory. For the largest state space considered, $|S|=100$, a 600-round bandit simulation completes within 10 minutes, corresponding to approximately one second per round. This computational efficiency demonstrates the practical feasibility of the proposed approach for real-time learning and communication.

\subsubsection{Limitations}

Effective deployment requires dataset-specific calibration, graph-based offline prior construction, and careful tuning of refinement hyperparameter for different data center environments. 



\end{document}